\theoremstyle{definition}
\newtheorem{theorem}{Theorem}
\newtheorem{definition}{Definition}
\newtheorem{lemma}{Lemma}[section]
\newtheorem{corollary}[lemma]{Corollary}
\newcommand{\QCPH}{\mathsf{QCPH}}
\newcommand{\BQP}{\mathbf{BQP}}
\newcommand{\pQCMA}{\text{\sc Precise}\mathbf{QCMA}}
\newcommand{\QCMA}{\mathsf{QCMA}}
\newcommand{\QPH}{\mathsf{QPH}}
\newcommand{\Qs}[1]{\mathsf{QC}\Sigma_{#1}}
\newcommand{\Qf}[1]{\mathsf{QC}\Pi_{#1}}
\newcommand{\abrak}[1]{\left\langle #1 \right \rangle}
\renewcommand{\P}{\mathbf{P}}
\newcommand{\cc}[1]{\mathsf{#1}}
\newcommand{\class}[1]{\mathcal{#1}}
\newcommand{\secref}[1]{\S \ref{#1}}
\newcommand{\defeq}{\overset{\mathrm{def}}{=\joinrel=}}
\begin{document}

\title{A Collapsible Polynomial Hierarchy for Promise Problems}
\author[]{Chirag Falor$^\S$\thanks{cfalor@mit.edu}}
\author[]{Shu Ge$^\S$\thanks{geshu@mit.edu} }

\author[]{Anand Natarajan\thanks{anandn@mit.edu}}
\affil[]{Massachusetts Institute of Technology}
\date{November 2023}

\maketitle
\def\thefootnote{$\S$}\footnotetext{These authors contributed equally to this work}\def\thefootnote{\arabic{footnote}}

\begin{abstract}
    The polynomial hierarchy has been widely studied in classical complexity theory. In this paper, we will generalize some commonly known results about the polynomial hierarchy to a version of the hierarchy extended to promise problems. This paper proposes new definitions of existential and universal operators for classes of promise problems. Applying these to $\cc{BQP}$, we recover the hierarchy proposed by Gharibian et al. (MFCS 2018). Moreover, using our definition, we give an easy proof of the \emph{collapse} of this hierarchy under a Karp-Lipton-like scenario, which was an open question for the original definition of Gharibian et al.
    

\end{abstract}

\section{Introduction}
Languages are decision problems that have a definite $\cc{YES}$ or $\cc{NO}$ answer for each input string. On the other hand, promise problems are problems that have a definite $\cc{YES}$ or $\cc{NO}$ solution only for a subset of all possible input strings: these inputs are said to satisfy the promise. An algorithm solving a promise problem only returns the correct answer for inputs satisfying the promise and can have arbitrary answers, or not even halt, for the rest of the inputs. When generalizing the results from languages to promise problems, we have to keep this subtlety in mind.

There are many important complexity classes, arising from probabilistic or quantum models of computation, that consist of promise problems, such as Promise-$\cc{BPP}$, Promise-$\cc{BQP}$, Promise-$\cc{QCMA}$, and Promise-$\cc{QMA}$. In fact, the promise versions of these classes are so common that often the notation $\cc{BQP}$ is implicitly used to mean Promise-$\cc{BQP}$ (and likewise for $\cc{QMA}$, etc.). However, one has to be careful with them as some intuitive `\emph{structural}' relations fail to hold~\cite{Goldreich06}. 
In particular, these structural issues have been an obstacle to finding a natural quantum analogue of the polynomial hierarchy from classical complexity theory. A leading candidate for such an analogue was proposed in~\cite{Gharibian18}. However, the authors of this work were unable to show that their quantum polynomial hierarchy has the property of `collapsing' whenever two distinct levels are equal, because of difficulties arising from the use of promise problems. Classically, this is a key feature of the polynomial hierarchy and is one of the elements of the Karp-Lipton theorem. 

To understand the issues with showing the collapsing property for a quantum polynomial hierarchy, let us recall the definition of the classical polynomial hierarchy. The classical polynomial hierarchy is defined as an \emph{existential} and \emph{universal} extension of $\P$, the class of languages recognized by a polynomial time Turing Machine. Each class in the hierarchy is obtained by repeatedly applying the existential and universal \emph{complexity class operators} $\exists \cdot$ and $\forall \cdot$ to the base class of $\P$, with higher levels in the hierarchy being obtained through the application of more quantifiers. The collapsing phenomenon occurs because two consecutive quantifiers of the same type can be merged into one: formally, $\exists \cdot \exists \cdot  = \exists \cdot $ and likewise for $\forall \cdot \forall \cdot = \forall \cdot$. 

The authors of~\cite{Gharibian18} propose a definition of the quantum polynomial hierarchy that is closely modeled on the classical definition. However, they do not define in terms of quantifier operators on a base complexity class: rather, they work directly with existential and universal quantifiers on input strings to a quantum circuit, to define a hierarchy of promise problems. As a result, the classical proof of collapse based on combining consecutive operators does not generalize. Moreover, the authors note, in Remark 17, that the standard definitions of $\exists \cdot$ and $\forall \cdot $ do \emph{not} behave well for classes of \emph{languages} arising from probabilistic or quantum computation, such as $\cc{BPP}$ or $\cc{BQP}$. 

Inspired by~\cite{Gharibian18}, we propose new existential and universal complexity class operators acting on promise-problem classes, such that consecutive operators of the same type can be merged, and such that some other basic structural relations valid for languages still hold. This allows us to extend some results from classical complexity theory to a hierarchy constructed from classes comprising promise problems. In particular, we were able to complete the Karp-Lipton-like theorem for the quantum-classical case proven in~\cite{Gharibian18}.

\paragraph{Related work.}
After the completion of this work, we became aware of two related works, also proving collapse-type results for generalizations of the quantum polynomial hierarchy. The first is~\cite{lockhart2017quantum_iso}, which proposed essentially the same definitions as we do for $\exists \cdot$ and $\forall \cdot$, and proved collapse for the resulting hierarchy with $\BQP$ as the base class, using similar techniques to us. In comparison, our presentation is somewhat more general, as it applies to any hierarchy built using our operators on any reasonable base class of promise problems (including, for instance $\mathsf{BPP}$). The second is~\cite{AGKR23}, which directly proves a Karp-Lipton theorem for the quantum polynomial hierarchy as well as other results about other hierarchies; we state their result more formally in~\Cref{thm:agkr23}. 

\paragraph{Organization.} This paper is structured as follows. In section \secref{sec: defn}, we define the notation and quantifiers relevant to the present paper. Then, in section \secref{sec: properties}, we prove some basic properties relating to the quantifiers. In section \secref{sec: collapse proof}, we use these properties to prove that a hierarchy for promise problems collapses if the existential and universal classes on a fixed level are equal. In section \secref{sec: QPH application}, we further prove the collapse of the quantum polynomial hierarchy of~\cite{Gharibian18} and use it to prove a crucial step in the \emph{quantum-classical variation} of the Karp-Lipton theorem proven in~\cite{Gharibian18}. We finally conclude with possible further applications of this result in section \secref{sec: conclusion}.

\section{Definitions}\label{sec: defn}

\begin{definition}[Promise problem]
A promise problem $A$ is a pair of disjoint subsets of $\{0,1\}^*$. That is, $A = (A_{yes}, A_{no})$ such that $A_{yes} \cap A_{no} = \phi$. The set $A_{yes} \cup A_{no}$ is called the \emph{promise}. A promise class is defined as a set of promise problems.
\end{definition}

Promise problems are a generalization of decision problems. An algorithm solving a promise problem can distinguish between  $\cc{YES}$ and $\cc{NO}$ instances if the promise holds, but there is no guarantee otherwise. This definition can be found in~\cite{Goldreich06}. We can also define the complement of the promise problem.

\begin{definition}[Complement]
The complement of a promise problem $A = (A_{yes}, A_{no})$ is denoted by $\overline{A}$, where $\overline{A} = (A_{no}, A_{yes})$. It is the promise problem with its $\cc{YES}$ and $\cc{NO}$ instances interchanged. Similarly, we can define the complement of the whole promise class $\class{C}$ as $\cc{co}({\class{C}})=\{\overline{A}|A\in \class{C}\}$.
\end{definition}
Complements follow these properties:
\begin{enumerate}
    \item $\cc{co}(\cc{co}(\class{C})) = \class{C}$.
    \item $\cc{co}({\class{C}}) = \cc{co}({\class{C'}}) \iff \class{C} = \class{C'}$.
\end{enumerate}

\begin{definition}[Existential quantifier] Let $\class{C}$ be a class of promise problems. $\exists \cdot \class{C}$ is defined as the class of promise problems such that a promise problem $L = (L_\text{yes}, L_\text{no}) \in \exists \cdot \class{C}$, if and only if there is polynomial $p$ and a promise problem $A= (A_\text{yes}, A_\text{no}) \in \class{C}$ such that for input $x$
\begin{eqnarray}
x \in L_\text{yes} \iff \exists y \text{ s.t. } (|y|= p(|x|)) \text{ and } \abrak{x,y}\in A_\text{yes}\nonumber\\
x \in L_\text{no} \iff \forall y \text{ s.t. } (|y|= p(|x|)) \text{, } \abrak{x,y}\in A_\text{no}.
\end{eqnarray}

\end{definition}

This definition is similar to those in~\cite{Allender90countinghierarchies, Wrathall1976CompleteSA}, albeit generalized to promise problems. Moreover, we set the length of the quantified string $y$ to be exactly a polynomial of input length. This is equivalent to allowing $y$ to be smaller as one can pad it to reach the required length (as long as $\class{C}$ is a reasonable model with at least logarithmic space). 

Universal quantifiers can be defined by replacing $\exists \leftrightarrow \forall$ in all the above locations.

\begin{definition}[Universal quantifier] Let $\class{C}$ be a class of promise problems. $\forall \cdot \class{C}$ is defined as the class of promise problems such that a promise problem $L = (L_\text{yes}, L_\text{no}) \in \forall \cdot \class{C}$, if and only if there is polynomial $p$ and a promise problem $A= (A_\text{yes}, A_\text{no}) \in \class{C}$ such that for input $x$
\begin{eqnarray}
x \in L_\text{yes} \iff \forall y \text{ s.t. } (|y| = p(|x|)) \text{, } \abrak{x,y}\in A_\text{yes}\nonumber\\
x \in L_\text{no} \iff \exists y \text{ s.t. } (|y| = p(|x|)) \text{ and } \abrak{x,y}\in A_\text{no}.
\end{eqnarray}
\end{definition}

\begin{definition}[Hierarchy of Promise Classes]
For a base promise class $\mathcal{C}$, let us define the first existential level as $\Sigma_1 = \exists \cdot \mathcal{C}$. Correspondingly, the first universal level is defined as $\Pi_1 = \forall \cdot \mathcal{C}$. Higher levels of the hierarchy can be defined inductively as (for $i>1$)
\begin{equation}
    \Sigma_i = \exists \cdot \Pi_{i-1};\qquad \qquad  \Pi_i = \forall \cdot \Sigma_{i-1}
\end{equation}
The hierarchy of $\mathcal{C}$ denoted by $\mathcal{H}$ is defined as the union over all the existential levels.
\begin{equation}
    \mathcal{H} = \bigcup_{i \in \mathbb{N}}\Sigma_{i}
\end{equation}
\end{definition}

Below we will prove the conditional collapse of quantum polynomial hierarchy. More specifically, we will prove the below lemma.
\begin{lemma}\label{lem: eq implies collapse}
    Let $\Sigma_i $ and $\Pi_i$ be $i^\text{th}$ existential and universal level of a hierarchy $\mathcal{H} \equiv \bigcup_{m \in \mathbb{N}}\Sigma_{m}$ of some promise class $\mathcal{C}$. If $\Sigma_i = \Pi_i $, then $\mathcal{H} = \Sigma_i$.
\end{lemma}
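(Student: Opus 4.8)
The plan is to mimic the classical proof that $\Sigma_i = \Pi_i$ forces the hierarchy to collapse to level $i$. The classical argument rests on two ingredients: first, that consecutive quantifiers of the same type merge, i.e. $\exists \cdot \exists \cdot = \exists \cdot$ and $\forall \cdot \forall \cdot = \forall \cdot$; and second, that the quantifier operators are monotone, so that $\class{C} \subseteq \class{C}'$ implies $\exists \cdot \class{C} \subseteq \exists \cdot \class{C}'$ (and likewise for $\forall \cdot$). I would expect both of these facts to be established as the ``basic properties'' in \secref{sec: properties}, so I will assume them here. Given these, the proof is an induction showing that every level $\Sigma_m$ and $\Pi_m$ for $m \geq i$ collapses down to $\Sigma_i$.

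The key steps, in order, are as follows. First I would reduce to showing that $\Sigma_{i+1} = \Sigma_i$; once one level of climbing fails to produce anything new, a routine induction propagates the collapse to all higher levels, giving $\mathcal{H} = \bigcup_m \Sigma_m = \Sigma_i$. To prove $\Sigma_{i+1} = \Sigma_i$, I start from the definition $\Sigma_{i+1} = \exists \cdot \Pi_i$. Using the hypothesis $\Pi_i = \Sigma_i$, I substitute to get $\Sigma_{i+1} = \exists \cdot \Sigma_i$. Now since $\Sigma_i = \exists \cdot \Pi_{i-1}$, this becomes $\exists \cdot \exists \cdot \Pi_{i-1}$, which by the merging property $\exists \cdot \exists \cdot = \exists \cdot$ collapses to $\exists \cdot \Pi_{i-1} = \Sigma_i$. (I should handle the base case $i=1$ separately, where $\Sigma_1 = \exists \cdot \class{C}$ directly, so $\Sigma_2 = \exists \cdot \Pi_1 = \exists \cdot \Sigma_1 = \exists \cdot \exists \cdot \class{C} = \exists \cdot \class{C} = \Sigma_1$.) By a symmetric argument using $\forall \cdot \forall \cdot = \forall \cdot$ and the hypothesis in the form $\Sigma_i = \Pi_i$, I also get $\Pi_{i+1} = \Pi_i = \Sigma_i$. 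Thus both $\Sigma_{i+1}$ and $\Pi_{i+1}$ equal $\Sigma_i$, which re-establishes the hypothesis $\Sigma_{i+1} = \Pi_{i+1}$ one level up and lets the induction continue.

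Carrying out the induction explicitly: assuming $\Sigma_m = \Pi_m = \Sigma_i$ for some $m \geq i$, the same two computations give $\Sigma_{m+1} = \exists \cdot \Pi_m = \exists \cdot \Sigma_i = \Sigma_i$ and $\Pi_{m+1} = \forall \cdot \Sigma_m = \forall \cdot \Sigma_i$. The second of these requires a little more care: to show $\forall \cdot \Sigma_i = \Sigma_i$ I would use $\Sigma_i = \Pi_i = \forall \cdot \Sigma_{i-1}$ and the merging $\forall \cdot \forall \cdot = \forall \cdot$, so $\forall \cdot \Sigma_i = \forall \cdot \Pi_i = \forall \cdot \forall \cdot \Sigma_{i-1} = \forall \cdot \Sigma_{i-1} = \Pi_i = \Sigma_i$. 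Hence $\Sigma_{m+1} = \Pi_{m+1} = \Sigma_i$, completing the inductive step. Since every $\Sigma_m$ with $m \leq i$ satisfies $\Sigma_m \subseteq \Sigma_i$ by monotonicity and nesting of the levels, the union $\mathcal{H}$ equals $\Sigma_i$.

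The main obstacle I anticipate is entirely in the quantifier-merging lemma $\exists \cdot \exists \cdot = \exists \cdot$ (and its dual), rather than in the induction above, which is then purely formal. For \emph{languages} this merging is immediate, but for \emph{promise problems} the subtlety is that the intermediate promise problem produced by the inner quantifier may fail to satisfy its promise on the composite strings, so one must verify that the combined single quantifier over a paired string $\langle y, z \rangle$ correctly reproduces both the $\cc{YES}$ and $\cc{NO}$ conditions \emph{on the promise}. This is precisely the place where the definition from~\cite{Gharibian18} ran into trouble, and where the new definitions of $\exists \cdot$ and $\forall \cdot$ in this paper are designed to make the merging go through; I would lean on that property as already proven in \secref{sec: properties} and check only that my substitutions respect the promise structure at each step.
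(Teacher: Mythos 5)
Your proposal is correct and follows essentially the same route as the paper: substitute $\Pi_i = \Sigma_i$ under the $\exists\cdot$ operator, merge the resulting consecutive quantifiers via $\exists\cdot\exists\cdot = \exists\cdot$ (and dually for $\forall\cdot$) to get $\Sigma_{i+1} = \Sigma_i = \Pi_i = \Pi_{i+1}$, and then induct upward, with all the real work deferred to the merging lemmas of \secref{sec: properties}. If anything, you are slightly more careful than the paper in handling the base case $i=1$ and in justifying why the levels below $i$ are absorbed into $\Sigma_i$ when taking the union.
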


Before this, let's look at some properties of the quantifiers that we introduced.

\section{Properties of Quantifiers}\label{sec: properties}
The above-defined quantifiers follow some basic structural relations. We list them below and prove them in this section.
\begin{enumerate}
  \item $\cc{co}(\exists \cdot \mathcal{C}) = \forall\cdot \cc{co}(\mathcal{C}).$
    \item $\exists \cdot \exists \cdot \mathcal{C} = \exists \cdot \mathcal{C}.$
    \item $\forall \cdot \forall \cdot \mathcal{C} = \forall \cdot \mathcal{C}.$
    \item $\mathcal{C}_1=\mathcal{C}_2\implies \exists\cdot \mathcal{C}_1 = \exists\cdot \mathcal{C}_2.$
    \item $\mathcal{C}_1=\mathcal{C}_2\implies \forall\cdot \mathcal{C}_1 = \forall \cdot \mathcal{C}_2.$
\end{enumerate}

Let us prove each of the properties.
\begin{lemma}\label{lem: complement}
$\cc{co}(\exists \cdot \mathcal{C}) = \forall\cdot \cc{co}(\mathcal{C}).$
\end{lemma}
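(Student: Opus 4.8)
The plan is to prove set equality by showing mutual inclusion, or more directly, to establish a bijection between the defining data on each side. The cleanest route exploits the symmetry already baked into the definitions: the existential and universal quantifiers are defined by formulas that are exact mirror images of each other under the swap $\exists \leftrightarrow \forall$ together with the swap of the $\mathrm{yes}$ and $\mathrm{no}$ sets. Since complementation is precisely the operation that interchanges $\mathrm{yes}$ and $\mathrm{no}$ instances, I expect the identity to fall out almost mechanically once the definitions are unfolded.

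Concretely, first I would take an arbitrary promise problem $L = (L_{\mathrm{yes}}, L_{\mathrm{no}})$ and ask what it means for $\overline{L} \in \forall \cdot \cc{co}(\mathcal{C})$. By the definition of the universal quantifier, this holds iff there is a polynomial $p$ and a problem $B = (B_{\mathrm{yes}}, B_{\mathrm{no}}) \in \cc{co}(\mathcal{C})$ witnessing the two conditions for $\overline{L} = (L_{\mathrm{no}}, L_{\mathrm{yes}})$. I would then substitute $B = \overline{A}$ for $A \in \mathcal{C}$ (legitimate since $\cc{co}(\cc{co}(\mathcal{C})) = \mathcal{C}$, property 1 of complements), so that $B_{\mathrm{yes}} = A_{\mathrm{no}}$ and $B_{\mathrm{no}} = A_{\mathrm{yes}}$. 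Unwinding the universal-quantifier conditions for $\overline{L}$ with these substitutions should reproduce, line for line, exactly the existential-quantifier conditions defining $L \in \exists \cdot \mathcal{C}$: the clause $L_{\mathrm{no}} \iff \forall y,\ \abrak{x,y} \in B_{\mathrm{yes}} = A_{\mathrm{no}}$ matches the second existential clause, and $L_{\mathrm{yes}} \iff \exists y,\ \abrak{x,y} \in B_{\mathrm{no}} = A_{\mathrm{yes}}$ matches the first.

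This chain of equivalences shows $\overline{L} \in \forall \cdot \cc{co}(\mathcal{C}) \iff L \in \exists \cdot \mathcal{C}$, which is the same as $L \in \cc{co}(\forall \cdot \cc{co}(\mathcal{C})) \iff L \in \exists \cdot \mathcal{C}$, i.e. $\exists \cdot \mathcal{C} = \cc{co}(\forall \cdot \cc{co}(\mathcal{C}))$. Applying $\cc{co}(\cdot)$ to both sides and using the involution property $\cc{co}(\cc{co}(\mathcal{D})) = \mathcal{D}$ then yields $\cc{co}(\exists \cdot \mathcal{C}) = \forall \cdot \cc{co}(\mathcal{C})$, as desired. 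Alternatively I could phrase the whole argument as a direct double inclusion, but the equivalence-chain form is tighter and avoids repeating the same substitution twice.

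I do not anticipate a serious obstacle here, since the statement is essentially a bookkeeping identity reflecting de Morgan duality at the level of complexity-class operators. The only point requiring a little care is making sure the \emph{same} polynomial $p$ and the \emph{same} (complemented) witness problem serve both directions simultaneously — that the witness $A$ for the existential side and the witness $\overline{A}$ for the universal side are genuinely in correspondence via a map that is a bijection on the respective classes. Because the quantifier conditions are biconditionals (each defining clause is an $\iff$), and because the disjointness $L_{\mathrm{yes}} \cap L_{\mathrm{no}} = \phi$ is preserved automatically under the swap, no additional consistency check on the promise should be needed; the main thing to state explicitly is the use of the two complement properties listed just above the lemma.
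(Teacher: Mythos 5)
Your argument is correct and is essentially the same as the paper's: both unfold the definitions, pass to the complemented witness problem $\overline{A}$ so that the yes/no sets swap, and observe that the universal-quantifier clauses become exactly the existential-quantifier clauses, with equality following because every step is a biconditional. The only cosmetic difference is that you start from the right-hand side and finish by applying $\cc{co}(\cdot)$ to both sides, whereas the paper starts from a problem in $\cc{co}(\exists \cdot \mathcal{C})$ and rewrites it directly.
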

\begin{proof}
Consider a promise problem $P \in \cc{co}(\exists \cdot \mathcal{C})$. This means $\bar{P} \in \exists \cdot \mathcal{C}$. Let the corresponding promise problem in $\mathcal{C}$ be $A$. We have 
\begin{eqnarray}
x \in P_\text{no} \iff \exists y,  \abrak{x,y}\in A_{\text{yes}}\nonumber \\
x \in P_\text{yes} \iff \forall y,  \abrak{x,y}\in A_{\text{no}},
\end{eqnarray}
where the quantifiers are over strings $y$ whose length is equal to $p_{P,A}(|x|)$ for a fixed polynomial $p_{P,A}$ of $x$. 
Let $B = \bar{A}  \in \cc{co}(\mathcal{C})$. We can rewrite the above implications as 
\begin{eqnarray}
x \in P_\text{yes} \iff \forall y,  \abrak{x,y}\in B_{\text{yes}}\nonumber
\\
x \in P_\text{no} \iff \exists y,  \abrak{x,y}\in B_{\text{no}},
\end{eqnarray}
with $y$ having the same length constraint.
This means that $P \in  \forall\cdot \cc{co}(\mathcal{C})$. Hence, 
\begin{equation}
    \cc{co}(\exists \cdot \mathcal{C}) = \forall\cdot \cc{co}(\mathcal{C}).
\end{equation}
\end{proof}

\begin{lemma}\label{lem:exists_exists}
$\exists \cdot \exists \cdot \mathcal{C} = \exists \cdot \mathcal{C}.$
\end{lemma}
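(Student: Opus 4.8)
The plan is to prove the two inclusions $\exists \cdot \exists \cdot \mathcal{C} \subseteq \exists \cdot \mathcal{C}$ and $\exists \cdot \mathcal{C} \subseteq \exists \cdot \exists \cdot \mathcal{C}$ separately. The second (backward) inclusion should be routine: given $L \in \exists \cdot \mathcal{C}$ witnessed by some $A \in \mathcal{C}$ and polynomial $p$, I would introduce a vacuous second quantifier, say over a string $z$ of length $1$, by taking the inner problem to ignore $z$ entirely. This realizes $L$ as an element of $\exists \cdot \exists \cdot \mathcal{C}$, modulo checking that the promise conditions still line up — which they do trivially since the value of $\langle x, z \rangle$-membership does not depend on $z$.

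The main content is the forward inclusion, which mirrors the classical quantifier-merging argument. Suppose $L \in \exists \cdot \exists \cdot \mathcal{C}$. Then there is an intermediate promise problem $M \in \exists \cdot \mathcal{C}$ and a polynomial $q$ such that $x \in L_\text{yes} \iff \exists y_1 \,(|y_1| = q(|x|))$ with $\langle x, y_1 \rangle \in M_\text{yes}$, and $x \in L_\text{no} \iff \forall y_1$ with $\langle x, y_1\rangle \in M_\text{no}$. In turn, $M \in \exists \cdot \mathcal{C}$ gives an $A \in \mathcal{C}$ and polynomial $r$ such that $\langle x, y_1\rangle \in M_\text{yes} \iff \exists y_2 \,(|y_2| = r(q(|x|) + |x| + O(1)))$ with $\langle \langle x, y_1\rangle, y_2 \rangle \in A_\text{yes}$, and dually for $M_\text{no}$. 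The plan is to define a single combined problem $A' \in \mathcal{C}$ operating on inputs of the form $\langle x, \langle y_1, y_2\rangle\rangle$ that internally re-pairs its argument into $\langle\langle x, y_1\rangle, y_2\rangle$ and then invokes $A$; since this re-pairing is a fixed polynomial-time (indeed logspace) rearrangement of the input string, $A' \in \mathcal{C}$ provided $\mathcal{C}$ is closed under such trivial reductions, which is the ``reasonable model'' assumption already invoked in the definitions. I would set the combined witness to be $y = \langle y_1, y_2\rangle$ with a single polynomial $p(|x|)$ bounding its length.

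The step I expect to be the genuine obstacle — and the place where promise problems bite, as the paper emphasizes — is verifying the $L_\text{no}$ condition, because collapsing nested quantifiers requires correctly handling the asymmetry between the existential and universal sides. On the YES side the logic is clean: $x \in L_\text{yes}$ iff $\exists y_1 \exists y_2$ with the pair in $A_\text{yes}$, which is exactly $\exists y$ with $\langle x, y\rangle \in A'_\text{yes}$. On the NO side I need $x \in L_\text{no} \iff \forall y, \langle x, y\rangle \in A'_\text{no}$. Unwinding the definitions, $x \in L_\text{no}$ means $\forall y_1, \langle x, y_1\rangle \in M_\text{no}$, and each such membership means $\forall y_2, \langle\langle x, y_1\rangle, y_2\rangle \in A_\text{no}$; combining the two universal quantifiers gives exactly $\forall \langle y_1, y_2\rangle$ with the pair in $A_\text{no}$, i.e. $\forall y, \langle x, y\rangle \in A'_\text{no}$. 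The care needed is that the two quantifiers genuinely range over all strings of the prescribed lengths and that the pairing $\langle y_1, y_2 \rangle$ is a length-regular bijection onto the combined length class, so that $\forall y_1 \forall y_2$ and $\forall \langle y_1, y_2\rangle$ quantify over the same set; I would handle this by fixing a standard length-preserving pairing and choosing $p$ accordingly. Once the length bookkeeping is pinned down, both inclusions follow, giving the claimed equality.
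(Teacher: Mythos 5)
Your proposal is correct and follows essentially the same route as the paper: the backward inclusion by a vacuous quantifier, and the forward inclusion by merging the two witnesses into a single string of exactly determined polynomial length so that $\exists y_1 \exists y_2$ and $\forall y_1 \forall y_2$ coincide with a single quantifier over the combined string. The only (minor, and welcome) difference is that you make explicit the closure of $\mathcal{C}$ under the re-pairing of inputs, which the paper leaves implicit under its ``reasonable model'' assumption.
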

\begin{proof}
Consider $P=(P_\text{yes}, P_\text{no}),$ a promise problem such that $P \in \exists \cdot \exists \cdot \mathcal{C}$. We will show that $P 
\in \exists \cdot \mathcal{C}$. We know as $P \in \exists \cdot \exists \cdot \mathcal{C}$, so there is a corresponding promise problem $A \in \exists \cdot \mathcal{C}$ which in turn has a corresponding promise problem $B \in \mathcal{C}$. These promise problems are associated with polynomial functions $p_{P, A}$ and $p_{A, B}$ respectively, bounding the length of the quantified variables. Expanding out the definition of $\exists \cdot \exists \cdot \mathcal{C}$, we obtain the following implications
\begin{eqnarray}
x\in P_\text{yes} \iff \exists y_1 (\abrak{x,y_1}\in A_{\text{yes}})\iff \exists y_1 (\exists y_2 \abrak{x,y_1, y_2}\in B_{\text{yes}})\\
x\in P_\text{no} \iff \forall y_1 (\abrak{x,y_1}\in A_{\text{no}})\iff \forall y_1 (\forall y_2 \abrak{x,y_1, y_2}\in B_{\text{no}}),
\end{eqnarray}
where the quantification is over $y_1$ such that $|y_1| = p_{P,A}(|x|)$ and $y_2$ such that $|y_2| = p_{A,B}(|x| + |y_1|)$.

From the above implications, we can construct a promise problem $P' \in \exists \cdot \mathcal{C}$ and take $y = y_1 || y_2$. In other words, we can concatenate both strings. The concatenated string has length
\begin{eqnarray}
|y_1| + |y_2| &=& p_{P,A}(|x|) + p_{A,B}(|x| + |y_1|) \\
&=& p_{P,A}(|x|) + p_{A,B}(|x| + p_{P,A}(|x|)) \\
&\defeq& p_{P',B}(|x|)
\end{eqnarray}
where $p_{P',B}(\cdot)$ is still a polynomial function. We know that with this concatenation,
\begin{equation}
    \exists y_1 \exists y_2 \simeq \exists y \qquad \forall y_1 \forall y_2 \simeq \forall y.
\end{equation}
Here the length of $y$ should be exactly equal to $p_{P', B}(|x|)$.
Having $B$ as the base promise problem for $P'$, we get
\begin{eqnarray}
x\in P'_\text{yes}\iff \exists y,  \abrak{x,y}\in B_{\text{yes}} \iff\exists y_1 \exists y_2 \abrak{x,y_1, y_2}\in B_{\text{yes}} \iff x\in P_\text{yes}\\
x\in P'_\text{no}\iff \forall y , \abrak{x,y}\in B_{\text{no}}\iff \forall y_1 \forall y_2 \abrak{x,y_1, y_2}\in B_{\text{no}} \iff x\in P_\text{no}.
\end{eqnarray}
Hence, $P = P' \in \exists \cdot \mathcal{C}$. This means that $\exists \cdot\exists \cdot \mathcal{C}\subseteq \exists \cdot \mathcal{C}$.\\
On the other hand, we already know that $\exists \cdot \mathcal{C}\subseteq\exists \cdot \exists \cdot \mathcal{C}$ because we can simply ignore one of the quantified strings. So, 
\begin{equation}
    \exists \cdot\exists \cdot \mathcal{C} =\exists \cdot \mathcal{C}.
\end{equation}

\end{proof}

\begin{corollary}\label{cor: forall_forall}
$\forall \cdot \forall \cdot \mathcal{C} = \forall \cdot \mathcal{C}.$
\end{corollary}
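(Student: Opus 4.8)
The plan is to avoid repeating the string-concatenation argument of \Cref{lem:exists_exists} and instead obtain the universal statement as a purely formal consequence of the existential one, exploiting the duality between $\exists \cdot$ and $\forall \cdot$ supplied by \Cref{lem: complement} together with the two elementary properties of complementation ($\cc{co}(\cc{co}(\mathcal{C})) = \mathcal{C}$ and injectivity of $\cc{co}$). The point is that a universal quantifier is just an existential quantifier conjugated by complementation, so collapsing two universals should reduce to the already-proven collapse of two existentials.

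First I would record the dual of \Cref{lem: complement}. Substituting $\cc{co}(\mathcal{D})$ for $\mathcal{C}$ in the identity $\cc{co}(\exists \cdot \mathcal{C}) = \forall \cdot \cc{co}(\mathcal{C})$ and using $\cc{co}(\cc{co}(\mathcal{D})) = \mathcal{D}$ gives $\cc{co}(\exists \cdot \cc{co}(\mathcal{D})) = \forall \cdot \mathcal{D}$, so that
\begin{equation}
    \forall \cdot \mathcal{D} = \cc{co}(\exists \cdot \cc{co}(\mathcal{D})).
\end{equation}
In words, every universal quantifier can be rewritten as an existential quantifier sandwiched between two complementations.

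Then I would apply this identity twice and collapse the inner existentials. Taking $\mathcal{D} = \forall \cdot \mathcal{C}$ in the identity above, and then applying it once more to $\cc{co}(\forall \cdot \mathcal{C}) = \exists \cdot \cc{co}(\mathcal{C})$, yields
\begin{equation}
    \forall \cdot \forall \cdot \mathcal{C} = \cc{co}(\exists \cdot \cc{co}(\forall \cdot \mathcal{C})) = \cc{co}(\exists \cdot \exists \cdot \cc{co}(\mathcal{C})).
\end{equation}
Now \Cref{lem:exists_exists}, applied to the base class $\cc{co}(\mathcal{C})$, collapses $\exists \cdot \exists \cdot \cc{co}(\mathcal{C})$ to $\exists \cdot \cc{co}(\mathcal{C})$, and a final use of the dual identity turns $\cc{co}(\exists \cdot \cc{co}(\mathcal{C}))$ back into $\forall \cdot \mathcal{C}$, giving the claim.

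Since the argument is entirely formal, I do not expect a genuine obstacle; the only thing to watch is the bookkeeping of the complementations, invoking the involution and injectivity properties of $\cc{co}$ exactly where needed so that each rewriting step is reversible. A completely parallel alternative would be to mirror the proof of \Cref{lem:exists_exists} verbatim, swapping the roles of $\exists$ and $\forall$ and concatenating the universally quantified witnesses; this works as well, but the complement route is shorter and makes the result genuinely a corollary.
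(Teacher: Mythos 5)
Your proof is correct and follows essentially the same route as the paper: both reduce the universal collapse to \Cref{lem:exists_exists} applied to $\cc{co}(\mathcal{C})$, using \Cref{lem: complement} and the involution $\cc{co}(\cc{co}(\mathcal{C})) = \mathcal{C}$ to translate between the two quantifiers. The only difference is cosmetic---you rewrite $\forall\cdot\forall\cdot\mathcal{C}$ outward-in while the paper complements both sides of the existential identity---so the arguments are the same.
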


\begin{proof}
We can use Lemma \ref{lem: complement} and \ref{lem:exists_exists} to prove the above corollary. Applying Lemma \ref{lem:exists_exists} on $\cc{co}(\mathcal{C})$, we have
\begin{equation}
    \exists \cdot\exists \cdot\cc{co}(\mathcal{C}) = \exists \cdot \cc{co}(\mathcal{C})
\end{equation}
Take the complement on both sides
\begin{eqnarray}
    \cc{co}(\exists \cdot\exists \cdot\cc{co}(\mathcal{C})) &=& \cc{co}(\exists \cdot \cc{co}(\mathcal{C}))\\
    \forall \cdot \forall \cdot \cc{co}(\cc{co}(\mathcal{C})) &=& \forall \cdot \cc{co}(\cc{co}(\mathcal{C}))
    \\
    \forall \cdot \forall \cdot \mathcal{C} &=& \forall \cdot \mathcal{C}.  
\end{eqnarray}
\end{proof}

\begin{lemma}\label{lem: exists_equality}
$\mathcal{C}_1=\mathcal{C}_2\implies \exists\cdot \mathcal{C}_1 = \exists\cdot \mathcal{C}_2.$
\end{lemma}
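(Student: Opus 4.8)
The plan is to unfold the definition of the existential operator and to observe that the argument class enters the definition of $\exists \cdot \mathcal{C}$ only through a single requirement: that the witnessing promise problem $A$ belong to $\mathcal{C}$. Since $\mathcal{C}_1$ and $\mathcal{C}_2$ are equal \emph{as sets of promise problems}, the collection of admissible witnesses is literally identical, so the membership conditions for $\exists \cdot \mathcal{C}_1$ and $\exists \cdot \mathcal{C}_2$ must coincide. In other words, this lemma is really the statement that the operator $\exists \cdot$ is well-defined as a function of the set $\mathcal{C}$.

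Concretely, I would argue by double inclusion. Take any $L = (L_\text{yes}, L_\text{no}) \in \exists \cdot \mathcal{C}_1$. By definition there is a polynomial $p$ and a promise problem $A \in \mathcal{C}_1$ satisfying the two defining implications relating $L$ to $A$ (with quantified strings of length $p(|x|)$). Since $\mathcal{C}_1 = \mathcal{C}_2$, we have $A \in \mathcal{C}_2$; the same polynomial $p$ and the same $A$ then witness $L \in \exists \cdot \mathcal{C}_2$ verbatim, because the defining implications make no reference to the ambient class beyond $A$'s membership in it. Hence $\exists \cdot \mathcal{C}_1 \subseteq \exists \cdot \mathcal{C}_2$. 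Exchanging the roles of $\mathcal{C}_1$ and $\mathcal{C}_2$ yields the reverse inclusion, and equality follows.

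There is essentially no obstacle here: the only thing to verify is that the definition of $\exists \cdot \mathcal{C}$ genuinely depends on $\mathcal{C}$ solely through $A \in \mathcal{C}$, and does not smuggle in any other dependence on how the class happens to be presented. Because the statement is this direct, I expect the whole proof to be a single short paragraph. The symmetric statement for the universal operator (Property 5, Lemma \ref{lem: exists_equality}'s counterpart) can then be obtained either by the identical argument with $\exists$ and $\forall$ interchanged, or more economically via the complement relation of Lemma \ref{lem: complement}, in exactly the way Corollary \ref{cor: forall_forall} was derived from Lemma \ref{lem:exists_exists}.
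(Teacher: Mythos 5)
Your proposal is correct and follows essentially the same argument as the paper: take $P \in \exists\cdot\mathcal{C}_1$ with witness $A \in \mathcal{C}_1$, note $A \in \mathcal{C}_2$ since the classes are equal as sets, reuse the same witness and polynomial, and conclude by symmetry. The paper's proof is exactly this one-paragraph double-inclusion argument.
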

\begin{proof}
Consider a promise problem $P \in \exists\cdot \mathcal{C}_1$. We have a corresponding promise problem $A \in \mathcal{C}_1$ such that 

\begin{eqnarray}
x \in P_\text{yes} \iff \exists y \text{ s.t. }\abrak{x,y}\in A_\text{yes}\nonumber\\
x \in P_\text{no} \iff \forall y \text{ s.t. } \abrak{x,y}\in A_\text{no}.
\end{eqnarray}
As $\mathcal{C}_1=\mathcal{C}_2$, it means $A \in \mathcal{C}_2$. So, as the above holds for $A$, it means that $P \in \exists\cdot \mathcal{C}_2$. So, we have shown $\exists\cdot \mathcal{C}_1 \subseteq \exists\cdot \mathcal{C}_2$. By symmetry,
\begin{equation}
    \exists\cdot \mathcal{C}_1 = \exists\cdot \mathcal{C}_2.
\end{equation}
\end{proof}

\begin{corollary}\label{cor: forall_equality}
$\mathcal{C}_1=\mathcal{C}_2\implies \forall\cdot \mathcal{C}_1 = \forall\cdot \mathcal{C}_2.$
\end{corollary}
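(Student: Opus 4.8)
The plan is to mirror the derivation of \Cref{cor: forall_forall}, obtaining the universal statement from its existential counterpart (\Cref{lem: exists_equality}) by conjugating with the complement operation. The key tools are the duality $\cc{co}(\exists \cdot \mathcal{C}) = \forall \cdot \cc{co}(\mathcal{C})$ from \Cref{lem: complement}, together with the two basic complement identities listed earlier: $\cc{co}(\cc{co}(\mathcal{C})) = \mathcal{C}$ and $\cc{co}(\mathcal{C}) = \cc{co}(\mathcal{C}') \iff \mathcal{C} = \mathcal{C}'$.

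First I would note that the hypothesis $\mathcal{C}_1 = \mathcal{C}_2$ immediately yields $\cc{co}(\mathcal{C}_1) = \cc{co}(\mathcal{C}_2)$ by the forward direction of the second complement identity. Applying \Cref{lem: exists_equality} to the classes $\cc{co}(\mathcal{C}_1)$ and $\cc{co}(\mathcal{C}_2)$ then gives $\exists \cdot \cc{co}(\mathcal{C}_1) = \exists \cdot \cc{co}(\mathcal{C}_2)$. Next I would take the complement of both sides and invoke \Cref{lem: complement}, which rewrites each side as $\cc{co}(\exists \cdot \cc{co}(\mathcal{C}_i)) = \forall \cdot \cc{co}(\cc{co}(\mathcal{C}_i)) = \forall \cdot \mathcal{C}_i$, using the involution identity $\cc{co}(\cc{co}(\mathcal{C}_i)) = \mathcal{C}_i$ in the last step. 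This yields $\forall \cdot \mathcal{C}_1 = \forall \cdot \mathcal{C}_2$, as desired. Alternatively, one could give a direct proof entirely parallel to that of \Cref{lem: exists_equality}: given $P \in \forall \cdot \mathcal{C}_1$ with witnessing promise problem $A \in \mathcal{C}_1$, the equality $\mathcal{C}_1 = \mathcal{C}_2$ forces $A \in \mathcal{C}_2$, so the same defining implications certify $P \in \forall \cdot \mathcal{C}_2$; symmetry then gives equality.

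Candidly, there is no substantive obstacle here — the statement is a formal consequence of results already established, and the only point requiring any attention is bookkeeping: making sure the complement identities are applied in the correct direction (in particular that the equivalence in the second identity is used, not merely one implication) and that the double complement is collapsed via involution. No new combinatorial or length-accounting argument is needed, in contrast to \Cref{lem:exists_exists}.
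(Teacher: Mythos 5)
Your proposal is correct and follows essentially the same route as the paper's own proof: apply \Cref{lem: exists_equality} to the complemented classes, take complements of both sides, and use \Cref{lem: complement} together with the involution $\cc{co}(\cc{co}(\mathcal{C})) = \mathcal{C}$ to obtain the universal statement. The alternative direct argument you sketch (reusing the witnessing problem $A$) would also work, but the paper opts for the complement-conjugation route you describe first.
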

\begin{proof}
We can use Lemma \ref{lem: complement} and \ref{lem: exists_equality} to prove the above corollary. Applying Lemma \ref{lem: exists_equality} on $\cc{co}(\mathcal{C}_1)$ and $\cc{co}(\mathcal{C}_2)$, we get
\begin{equation}\label{eqn: lem_equality_on_co}
    \cc{co}(\mathcal{C}_1)=\cc{co}(\mathcal{C}_2)\implies \exists\cdot \cc{co}(\mathcal{C}_1) = \exists\cdot \cc{co}(\mathcal{C}_2).
\end{equation}
Firstly, using the properties of complements, we have
\begin{eqnarray}
    \exists\cdot \cc{co}(\mathcal{C}_1) &=& \exists\cdot \cc{co}(\mathcal{C}_2)\\
    \cc{co}(\exists\cdot \cc{co}(\mathcal{C}_1)) &=& \cc{co}(\exists\cdot \cc{co}(\mathcal{C}_2))\\
    \forall \cdot \cc{co}(\cc{co}(\mathcal{C}_1)) &=& \forall \cdot \cc{co}(\cc{co}(\mathcal{C}_2))\\
\forall\cdot \mathcal{C}_1 &=& \forall\cdot \mathcal{C}_2
\end{eqnarray}
So, the equation \ref{eqn: lem_equality_on_co} can be rewritten as 
\begin{equation}
    \mathcal{C}_1=\mathcal{C}_2\implies \forall\cdot \mathcal{C}_1 = \forall\cdot \mathcal{C}_2.
\end{equation}
\end{proof}

\section{Conditional Collapse of Promise Hierarchy}\label{sec: collapse proof}
We may now use the properties proven in section \ref{sec: properties} to prove that a hierarchy formed by any promise class collapses if $\Sigma_i$ and $\Pi_i$ are equal at any level $i$. The hierarchy collapses to the level $i$ at which the two classes are equal. The argument is very similar to the ones used for classical polynomial hierarchy.\\

\begin{proof}[Proof of Lemma 2.1]
    Using Lemma \ref{lem: exists_equality}, we get
\begin{eqnarray}
\exists \cdot \Sigma_{i} &=& \exists \cdot \Pi_{i}\\
\exists \cdot \underbrace{\exists\cdot\forall\cdots}_{i\text{ times}} \mathcal{C}&=&
        \exists \cdot  \underbrace{\forall\cdot\exists\cdots}_{i\text{ times}} \mathcal{C}
\end{eqnarray}
Then, we use Lemma \ref{lem:exists_exists} on the LHS to get 
\begin{equation}
\Sigma_{i} = \Sigma_{i+1}.
\end{equation}
Similarly, using Corollary \ref{cor: forall_forall} and Corollary \ref{cor: forall_equality}, we get
\begin{equation}
    \Pi_{i+1} = \Pi_{i},
\end{equation}
which from initial assumption gives us
\begin{equation}
    \Pi_{i+1} = \Pi_{i} = \Sigma_{i} = \Sigma_{i+1}.
\end{equation}
So, from $\Sigma_i = \Pi_i$, we have shown that $\Sigma_{i+1} = \Sigma_{i}$ and $\Pi_{i+1} = \Sigma_{i+1}$.
Now, we can use induction to show that for any $m \geq i$,
\begin{equation}
    \Sigma_{m} = \Sigma_{i}.
\end{equation}
Hence, taking union over all $m$, we get
\begin{equation}
    \mathcal{H} \equiv \bigcup_{m \in \mathbb{N}}\Sigma_{m} = \Sigma_{i}.
\end{equation}
So, we show that $\mathcal{H}$ collapses to $\Sigma_i$, if $\Sigma_i = \Pi_i $.
\end{proof}

\section{Applications to Quantum Polynomial Hierarchy}\label{sec: QPH application}

Building on the earlier general collapse proof for any promise class, we can adapt this to Promise-$\BQP$. For the rest of this paper, we use the notation from~\cite{Gharibian18}. 
The authors of the paper denote the $i^\text{th}$ existential and universal level of the quantum polynomial hierarchy by $\Qs{i}$ and $\Qf{i}$. Similarly, $\QCPH$ is the corresponding version of the Quantum polynomial hierarchy. It can be seen by inspecting their definition that their $\Qs{i}$ and $\Qf{i}$ are exactly our $\Sigma_i$ and $\Pi_i$ applied to Promise-$\BQP$. To illustrate this, we show formal equality below for $\Qs{i}$ and our $\Sigma_i$. The proof for $\Qf{i}$ and $\Pi_i$ is analogous.
\begin{definition}
    For $0 \leq s < c \leq 1$, let Promise-$\BQP(c,s)$ be the class of promise problems $(A_{yes}, A_{no})$ for which there exists a poly-time quantum algorithm $V$ such that for all $x \in A_{yes}$, the probability that $V$ accepts $x$ is at least $c$, and for all $x \in A_{no}$, the probability that $V$ accepts $x$ is at most $s$.
\end{definition}

\begin{lemma}
The class $\Qs{i}(c,s)$ of~\cite{Gharibian18} is equal to $\Sigma_i$ of the hierarchy built from Promise-$\BQP(c,s)$.
\end{lemma}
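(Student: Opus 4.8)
The plan is to prove this by induction on $i$, unfolding both definitions into a common ``flattened'' normal form: a single Promise-$\BQP(c,s)$ verifier $V$ preceded by $i$ alternating classical quantifiers over polynomial-length strings. The key observation is that the definition of $\Qs{i}(c,s)$ in~\cite{Gharibian18} is already presented in exactly this form, namely there is a poly-time quantum verifier $V$ with $x \in A_{yes} \iff \exists y_1 \forall y_2 \cdots Q_i y_i : \Pr[V \text{ accepts } \abrak{x,y_1,\ldots,y_i}] \geq c$ and $x \in A_{no} \iff \forall y_1 \exists y_2 \cdots \bar Q_i y_i : \Pr[V \text{ accepts } \abrak{x,y_1,\ldots,y_i}] \leq s$, each $y_j$ ranging over strings of some polynomial length. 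Our $\Sigma_i$ is instead built operator-by-operator as $\exists \cdot \Pi_{i-1}$, with $\Pi_{i-1} = \forall \cdot \Sigma_{i-2}$, and so on down to the base class Promise-$\BQP(c,s)$. The goal is to show these two constructions carve out identical $\cc{YES}$ and $\cc{NO}$ sets.

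First I would set the induction up to prove \emph{simultaneously} that $\Qs{i}(c,s) = \Sigma_i$ and $\Qf{i}(c,s) = \Pi_i$, since our definition of $\Sigma_i$ invokes $\Pi_{i-1}$ as its base class and vice versa; proving only one statement leaves the recursion unanchored. I would deliberately avoid routing the $\Pi$ case through the complement relation of \Cref{lem: complement}, because complementing a Promise-$\BQP(c,s)$ problem (via the verifier that rejects) shifts the parameters to $(1-s,1-c)$; the clean symmetric double induction sidesteps this bookkeeping.

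For the base case $i=1$, I would expand $\Sigma_1 = \exists \cdot \text{Promise-}\BQP(c,s)$ directly: $x \in L_\text{yes} \iff \exists y\, (\abrak{x,y}\in A_\text{yes})$ and $x \in L_\text{no} \iff \forall y\, (\abrak{x,y}\in A_\text{no})$, where $A \in \text{Promise-}\BQP(c,s)$ is decided by a verifier $V$ with the $(c,s)$ gap on inputs $\abrak{x,y}$. This is literally the $i=1$ instance of Gharibian's definition, and $\Pi_1 = \Qf{1}(c,s)$ follows by swapping the quantifiers. For the inductive step, I would take $L \in \Sigma_i = \exists \cdot \Pi_{i-1}$, so some $A \in \Pi_{i-1}$ satisfies $x \in L_\text{yes} \iff \exists y_1\, (\abrak{x,y_1}\in A_\text{yes})$ and $x \in L_\text{no} \iff \forall y_1\, (\abrak{x,y_1}\in A_\text{no})$. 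By the inductive hypothesis $A \in \Qf{i-1}(c,s)$, so $A_\text{yes}$ and $A_\text{no}$ are each described by one verifier $V$ preceded by $i-1$ alternating quantifiers beginning with $\forall$, acting on input $\abrak{x,y_1}$. Substituting this description for $A$ folds the outer $\exists y_1$ onto the inner block, yielding $\exists y_1 \forall y_2 \cdots Q_i y_i$ for the $\cc{YES}$ side and its dual for the $\cc{NO}$ side, all feeding the same $V$ on the concatenated input; this is exactly $\Qs{i}(c,s)$. The reverse inclusion reads the same equalities backwards, splitting a $\Qs{i}(c,s)$ problem at its outermost quantifier into an $\exists \cdot$ applied to a $\Qf{i-1}(c,s) = \Pi_{i-1}$ problem.

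I expect the main obstacle to be bookkeeping rather than anything conceptual: checking that the two promise conditions coincide. In the operator definition the $\cc{YES}$ and $\cc{NO}$ sets are each cut out by their own dual quantifier pattern, and I must verify that this matches Gharibian's promise, that the pairing $\abrak{\cdot,\cdot}$ associates so that the flattened ``$V$ on $\abrak{x,y_1,\ldots,y_i}$'' is the same verifier the recursive unfolding produces, and that the per-level polynomial length bounds compose to a polynomial (immediate, exactly as in \Cref{lem:exists_exists}). Crucially, since each operator application only prepends a quantifier and never touches acceptance probabilities, the completeness/soundness gap $(c,s)$ is preserved verbatim through all $i$ levels, so no gap amplification or error reduction is ever required.
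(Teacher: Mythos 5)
Your proposal is correct and follows essentially the same route as the paper: both flatten the operator definition of $\Sigma_i$ into the $i$-quantifier normal form of $\Qs{i}(c,s)$ and reconcile the two length conventions (exact, recursively dependent lengths versus a single bound $p(|x|)$) by padding, with the $(c,s)$ gap untouched throughout. The only difference is organizational: you make the unfolding rigorous via an explicit simultaneous induction on $\Sigma_i$ and $\Pi_i$ (sensibly avoiding the complement detour and its $(1-s,1-c)$ parameter shift), whereas the paper asserts the flattened form of $\Sigma_i$ directly and spends its effort on the verifier/base-problem correspondence and the padding.
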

\begin{proof}
    By definition, $\Qs{i}(c,s)$ is the class of promise problems $(A_{yes}, A_{no})$ such that there exists a polynomial function $p$ and a poly-time quantum algorithm $V$ for which
    \begin{eqnarray}
    x \in A_{yes} \iff \exists y_1 \forall y_2 \dots Q_i y_i, \; \Pr[V(x, y_1, \dots, y_i)] &\geq c \\
    x \in A_{no} \iff \forall y_1 \exists y_2 \dots \overline{Q_i} y_i, \; \Pr[V(x, y_1, \dots, y_i)] &\leq s,
    \end{eqnarray}
    where the length of each $y_j$ for $j = 1, \dots, i$ is at most $p(|x|)$.

    By definition $\Sigma_i = \exists \cdot \forall \cdot \dots \text{Promise-}\BQP(c,s)$. Unpacking the definition, we see that a promise problem $(A_{yes}, A_{no}) \in \Sigma_i$ iff there exists a promise problem $(B_{yes}, B_{no}) \in \text{Promise-}\BQP(c,s)$ and a collection of polynomial functions $p_1, \dots, p_i$ such that 
     \begin{eqnarray}
    x \in A_{yes} \iff \exists y_1 \forall y_2 \dots Q_i y_i, \; \abrak{x, y_1, \dots, y_i} &\in  B_{yes}, \\
    x \in A_{no} \iff \forall y_1 \exists y_2 \dots \overline{Q_i} y_i, \; \abrak{x, y_1, \dots, y_i} &\in B_{no},
    \end{eqnarray}
    where for each $j = 1, \dots, i$, $|y_j| = p_j(|x| + \sum_{k=1}^{j-1} |y_{k}|)$.

    We will first show the containment $\Qs{i}(c,s) \subseteq \Sigma_i$. Let $(A_{yes}, A_{no})$ be a promise problem in $\Qs{i}(c,s)$ with $V$ the associated quantum poly-time algorithm. For each $j$, let $p_j = p$. Let $(B_{yes}, B_{no})$ be the promise problem defined by
    \begin{eqnarray} 
    \abrak{x, y'_1, \dots, y'_i} \in B_{yes} &\iff \Pr[V(x,y_1, \dots, y_i) \geq c] \\
    \abrak{x,y'_1, \dots, y'_i} \in B_{no} &\iff \Pr[V(x,y_1, \dots, y_i) \leq s],
    \end{eqnarray}
    where $y'_j$ is $y_j$ padded to length $p(|x| + \sum_{k=1}^{j-1} |y'_k|)$. 
    
    By construction, $(B_{yes}, B_{no}) \in \text{Promise-}\BQP(c,s)$. Now, taking $(B_{yes}, B_{no})$ as the `base promise problem' and applying the definition of $\Sigma_i$, we see that $(A_{yes}, A_{no}) \in \Sigma_i$.

    Now, let us show the containment the other way, i.e. the containment $\Sigma_i \subseteq \Qs{i}$. The argument is very similar. Let $V'$ be the Promise-$\BQP(c,s)$ algorithm for $(B_{yes}, B_{no})$, and let $p$ be the maximum length of any $y_j$; this is guaranteed to be polynomial in $|x|$. Then set $V$ to be the algorithm $V'$ applied to inputs $\abrak{x,y_1,\dots, y_i}$ where each $y_j$ has been padded to length $p(|x|)$. Applying the definition of $\Qs{i}(c,s)$ to $V$ and $p$, we see that $(A_{yes}, A_{no}) \in \Qs{i}(c,s)$.
\end{proof}

In particular, with Lemma \ref{lem: eq implies collapse}, setting the base class as Promise-$\BQP$, we immediately get the collapse of $\QCPH$ whenever any two levels $\Qs{i} = \Qf{i}$ are equal.

\begin{corollary} \label{cor:qcma-bqp}
    If $\Qs{i} = \Qf{i}$, then $\QCPH = \Qs{i} = \Qf{i}$.
\end{corollary}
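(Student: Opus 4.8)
The plan is to reduce the corollary to the general collapse result of \Cref{lem: eq implies collapse} by identifying the quantum polynomial hierarchy $\QCPH$ of~\cite{Gharibian18} with the abstract hierarchy $\mathcal{H}$ built from the base class Promise-$\BQP(c,s)$ via our operators $\exists\cdot$ and $\forall\cdot$. Once this identification is made, the corollary is essentially immediate.

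First I would record the two identifications needed. The lemma just proven shows $\Qs{i}(c,s) = \Sigma_i$, where $\Sigma_i$ is the $i$-th existential level of the hierarchy over Promise-$\BQP(c,s)$. By the symmetric argument obtained by interchanging the roles of $\exists$ and $\forall$ (equivalently, by combining the $\Qs{i}$ identification with \Cref{lem: complement}), one obtains the companion statement $\Qf{i}(c,s) = \Pi_i$. Since by definition $\QCPH = \bigcup_{i} \Qs{i} = \bigcup_i \Sigma_i = \mathcal{H}$, the two hierarchies coincide as collections of promise problems, and the respective levels match.

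With these identifications in hand, the hypothesis $\Qs{i} = \Qf{i}$ translates directly into the equality $\Sigma_i = \Pi_i$ of abstract levels. Applying \Cref{lem: eq implies collapse} with base class Promise-$\BQP(c,s)$ then yields $\mathcal{H} = \Sigma_i$, and unwinding the identifications gives $\QCPH = \Qs{i} = \Qf{i}$, as claimed.

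The main obstacle is not any deep argument but rather the bookkeeping required to make the two identifications watertight. I expect the most delicate point to be the $\Qf{i} = \Pi_i$ direction, which the preceding lemma only asserted to be ``analogous'': the padding conventions differ slightly between the two frameworks, since in~\cite{Gharibian18} each quantified block has length \emph{at most} $p(|x|)$ while our operators fix the length \emph{exactly} to a composed polynomial. I would therefore check that the padding argument used for $\Qs{i}$ carries over verbatim under the $\exists \leftrightarrow \forall$ swap. One should also confirm that the entire argument is carried out at a fixed gap $(c,s)$, so that Promise-$\BQP(c,s)$ is a single well-defined base class throughout; because the collapse in \Cref{lem: eq implies collapse} is purely structural and never manipulates the acceptance probabilities, no amplification is needed and the conclusion holds for each admissible choice of $c$ and $s$.
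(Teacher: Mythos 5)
Your proposal matches the paper's own argument: the paper likewise obtains this corollary immediately by combining the identification $\Qs{i} = \Sigma_i$ (and analogously $\Qf{i} = \Pi_i$) with \Cref{lem: eq implies collapse} applied to the base class Promise-$\BQP$. Your additional remarks on the padding conventions and on working at a fixed gap $(c,s)$ are sensible care points but do not change the route.
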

In particular, we have the following consequence of an equality between levels $1$ and $0$ of the hierarchy.
\begin{corollary}
If $\QCMA = \BQP$, then $\QPH = \BQP$.
\end{corollary}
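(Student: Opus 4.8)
The plan is to reach the collapse of the quantum-proof hierarchy $\QPH$ in two stages: first collapse the quantum-\emph{classical} hierarchy $\QCPH$ all the way down to $\BQP$ using the quantifier calculus developed in this paper, and then transfer that collapse to $\QPH$ by invoking the simulation results of~\cite{Gharibian18} relating quantum proofs to classical ones. Only the first stage lives inside our classical-witness framework; the second is the step we borrow from~\cite{Gharibian18}, and it is where the real difficulty sits.

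For the first stage I would translate the hypothesis $\QCMA = \BQP$ into an equality of the bottom levels of $\QCPH$. By the equivalence established just above, $\QCMA$ is exactly the first existential level $\Qs{1} = \exists \cdot \text{Promise-}\BQP$, while $\BQP$ is the base class at level $0$. Since Promise-$\BQP$ is closed under complement (flip the output qubit, equivalently swap the acceptance and rejection thresholds), \Cref{lem: complement} together with this self-complementarity gives $\Qf{1} = \cc{co}(\Qs{1}) = \cc{co}(\QCMA)$. Feeding in $\QCMA = \BQP$ and $\cc{co}(\BQP) = \BQP$ then yields $\Qs{1} = \Qf{1} = \BQP$, so \Cref{cor:qcma-bqp} with $i = 1$ applies verbatim and collapses the quantum-classical hierarchy: $\QCPH = \Qs{1} = \BQP$. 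This stage is routine once the collapse lemma is in hand, and it is precisely the crucial step that our quantifier machinery supplies.

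The second stage is the genuine obstacle, and it necessarily steps outside the calculus developed here, since $\QPH$ quantifies over quantum \emph{states} rather than classical strings. Within our definitions, the collapse of $\QCPH$ says nothing a priori about $\QPH$, so one must first relate the two hierarchies. I would close the gap by importing the structural results of~\cite{Gharibian18} relating $\QPH$ and $\QCPH$ — a level-by-level simulation of quantum proofs by classical ones — so as to sandwich $\QPH$ between $\BQP$ and the quantum-classical hierarchy, i.e. $\BQP \subseteq \QPH \subseteq \QCPH$, the left inclusion being trivial by ignoring the proof. Combined with $\QCPH = \BQP$ from the first stage, this squeeze forces $\QPH = \BQP$.

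I expect this transfer from classical to quantum proofs to be the hard part: replacing a quantum witness by a classical one is exactly the content of the $\QMA$-versus-$\QCMA$ question, so the bridge cannot come from our quantifier identities alone and must rely on the dedicated simulation results of~\cite{Gharibian18}. Accordingly, I would present the first stage as the self-contained contribution of this paper and cite the second as the external input that our collapse lemma completes.
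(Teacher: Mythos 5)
Your first stage is exactly the paper's proof: from $\QCMA = \Qs{1} = \BQP$, closure of Promise-$\BQP$ under complement together with \Cref{lem: complement} gives $\Qf{1} = \cc{co}(\Qs{1}) = \cc{co}(\BQP) = \BQP$, hence $\Qs{1} = \Qf{1}$, and \Cref{cor:qcma-bqp} with $i=1$ collapses the hierarchy to $\Qs{1} = \BQP$. Had you stopped there you would be done: the symbol $\QPH$ in the statement is not a new object but the paper's (sloppy) shorthand for the one hierarchy it has been working with throughout, namely $\QCPH$ built from Promise-$\BQP$ with classical witnesses --- this corollary is introduced as a ``consequence of an equality between levels $1$ and $0$ of the hierarchy,'' and the paper's own proof ends exactly where your first stage ends, by citing \Cref{cor:qcma-bqp}.

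Your second stage is where the genuine gap lies. You propose to sandwich a quantum-witness hierarchy via $\BQP \subseteq \QPH \subseteq \QCPH$, importing from~\cite{Gharibian18} ``a level-by-level simulation of quantum proofs by classical ones.'' No such result exists in that paper (or anywhere): the containment $\QPH \subseteq \QCPH$ would in particular place $\QMA$ inside a hierarchy with purely classical witnesses, so it is at least as strong as the open $\QMA$-versus-$\QCMA$ problem that you yourself flag as the obstacle. Indeed, even the ``easy'' direction $\QCPH \subseteq \QPH$ was left open in~\cite{Gharibian18} because of precisely the promise-problem pathologies this paper is about, and the relation between the two hierarchies was only partially addressed later in~\cite{AGKR23}, and in that direction, not the one you need. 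So the external input your second stage leans on is unavailable; if the corollary were genuinely about a quantum-witness hierarchy, neither your argument nor the paper's would establish it. The correct resolution is that your second stage is unnecessary under the intended reading, not that it can be completed.
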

\begin{proof}
    If $\QCMA = \Qs{1} = \BQP$, then by applying $\cc{co}$ to both sides, we get $\Qf{1} = \cc{co}\BQP = \BQP$, and hence $\Qs{1} = \Qf{1}$. Applying the previous corollary finishes the proof.
\end{proof}
The authors of~\cite{Gharibian18} illustrate that if $\pQCMA \subseteq \BQP_{/\text{mpoly}}$, then $\Qs{2} = \Qf{2}$. However, they note the challenge of using this equality for collapsing the $\QCPH$ due to the non-structural behavior of promise classes. Our paper addresses this subtlety. Specifically, our last two corollaries taken together resolve Conjecture 9 of their paper.

Independently of our work, in a forthcoming paper~\cite{AGKR23}, the authors present their own proof of a Karp-Lipton analogue for $\QCPH$, in addition to results about other versions of the quantum polynomial hierarchy with quantification over quantum states.
\begin{theorem}[\cite{AGKR23}]
\label{thm:agkr23}
If $\QCMA \subseteq \BQP_{/\text{mpoly}}$, then $\QCPH = \Qs{2} = \Qf{2}$.
\end{theorem}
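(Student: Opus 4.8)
The plan is to reduce the statement to the single equality $\Qs{2} = \Qf{2}$ and then invoke \Cref{cor:qcma-bqp}, which already delivers the full collapse $\QCPH = \Qs{2} = \Qf{2}$. To obtain this equality it suffices to prove one containment, say $\Qf{2} \subseteq \Qs{2}$, because the reverse containment comes for free by complementation: using \Cref{lem: complement} repeatedly together with $\cc{co}(\BQP) = \BQP$ one checks that $\cc{co}(\Qs{2}) = \Qf{2}$, so if $\Qf{2} \subseteq \Qs{2}$ then for any $P \in \Qs{2}$ we have $\overline{P} \in \Qf{2} \subseteq \Qs{2}$ and hence $P \in \cc{co}(\Qs{2}) = \Qf{2}$. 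Thus the entire content of the theorem is the single inclusion $\Qf{2} \subseteq \Qs{2}$ under the hypothesis $\QCMA \subseteq \BQP_{/\text{mpoly}}$.

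This inclusion is the quantum-classical analogue of the Karp--Lipton collapse $\Pi_2^p \subseteq \Sigma_2^p$, and I would follow the classical proof in spirit. Take $A \in \Qf{2}$ witnessed by a $\BQP$ verifier $V$ with $x \in A_{yes} \iff \forall y_1\, \exists y_2\, V(x,y_1,y_2)\text{ accepts}$ and $x \in A_{no} \iff \exists y_1\, \forall y_2\, V\text{ rejects}$. For fixed $(x,y_1)$ the inner statement ``$\exists y_2\, V$ accepts'' defines a $\QCMA$ problem $D$. The hypothesis places $D$ in $\BQP_{/\text{mpoly}}$, so there is a $\BQP$ machine $M$ and advice $\{a_n\}$, with $a_n$ depending only on $|x|$ (since $|y_1|$ is a fixed polynomial of $|x|$), deciding $D$. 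The crucial move, exactly as in the classical theorem, is not to use $M$ merely to \emph{decide} the inner question but to \emph{produce a candidate witness} $y_2$ by a search-to-decision reduction, and then to \emph{verify} it by running $V$. One then builds the $\Qs{2}$ form by existentially guessing the advice string $a$ and universally quantifying over $y_1$: the base $\BQP$ computation, on input $\abrak{x,a,y_1}$, runs the advised search to assemble a candidate $y_2$ and accepts iff $V(x,y_1,y_2)$ accepts.

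The soundness direction is where the construction is robust against the non-structural pathologies of promise classes, and this is what makes the final verification indispensable. If $x \in A_{no}$ there is a $y_1^\ast$ for which \emph{no} $y_2$ causes $V$ to accept; then for \emph{any} guessed advice $a$ the search outputs \emph{some} string $y_2$, and since every $y_2$ is rejected by $V$, the base computation rejects on $y_1^\ast$. Hence $\forall a\, \exists y_1$ the computation rejects, as required for a $\Qs{2}$ no-instance, and this holds no matter how adversarially a wrong advice string behaves, because a false accept would require an actual accepting witness, which does not exist. Conversely, if $x \in A_{yes}$ we are free to supply the \emph{correct} advice $a^\ast$, for which the search returns a genuine accepting witness for every $y_1$, so $\exists a\, \forall y_1$ the computation accepts.

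The main obstacle is the completeness/search direction: making search-to-decision promise-robust using only a standard $\QCMA$ oracle-with-advice rather than a $\pQCMA$ one. The difficulty is that the natural bit-fixing search queries prefix problems of the form ``does the current prefix $w$ admit a completion that $V$ accepts,'' and on a yes-instance a wrong bit can lead to a prefix whose best completion accepts with probability strictly between the soundness and completeness thresholds. Such a query is \emph{off-promise}, and the advised machine may answer it arbitrarily, derailing the greedy search. I would address this by first amplifying $V$ so that its gap is exponentially small, and by reformulating the prefix problems so that they carry a genuine promise gap and remain legitimate $\QCMA$ instances whose existentially-guessed advice drives the search correctly on the fixed yes-instance; the verification step then guarantees that any residual search error can only cause rejection, never a false accept. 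Carrying this out with standard $\QCMA$ advice, as opposed to the $\pQCMA$ advice underlying the weaker collapse of~\cite{Gharibian18}, is precisely the delicate point, and is the reason the theorem is genuinely stronger than what follows from combining~\cite{Gharibian18} with \Cref{lem: eq implies collapse}.
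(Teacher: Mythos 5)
First, a point of comparison: the paper does not prove this theorem at all --- it is quoted verbatim from~\cite{AGKR23}. The paper's own Karp--Lipton-type conclusion is the strictly weaker one obtained by combining the result of~\cite{Gharibian18} that $\pQCMA \subseteq \BQP_{/\text{mpoly}}$ implies $\Qs{2}=\Qf{2}$ with \Cref{cor:qcma-bqp}. So there is no in-paper proof to measure yours against, and you are attempting something harder than anything this paper establishes. Your outer scaffolding is fine and matches the paper's machinery: reducing to the single inclusion $\Qf{2}\subseteq\Qs{2}$ is essentially \Cref{lem: subset implies eq} specialized to $\BQP$ (using $\cc{co}(\BQP)=\BQP$ and \Cref{lem: complement}), and \Cref{cor:qcma-bqp} then delivers the full collapse. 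Your soundness analysis is also correct: on a no-instance, for the bad $y_1^\ast$ every candidate $y_2$ is accepted by $V$ with probability at most $s$, so the search-then-verify computation accepts with probability at most $s$ no matter what advice is guessed.

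The genuine gap is that the completeness direction --- the promise-robust search-to-decision reduction --- is the entire technical content of the theorem, and you explicitly leave it unresolved. The failure mode you name is real and is not repaired by anything you write: on a yes-instance, a prefix $w$ of the witness can have $\max_{y_2 \supseteq w}\Pr[V(x,y_1,y_2)\text{ accepts}]$ strictly between $s$ and $c$; such a prefix query is off-promise, the advised machine may answer arbitrarily, and the greedy search can then terminate at a $y_2$ whose acceptance probability lies in $(s,c)$, making the base $\BQP$ computation itself off-promise and violating the $\Qs{2}$ completeness condition. Your proposed remedies are stated as intentions rather than arguments: amplification of $V$ widens the interval $(s,c)$ (your phrase ``exponentially small gap'' has this backwards) but does nothing to prevent a prefix's optimal completion value from landing inside it, and ``reformulating the prefix problems so that they carry a genuine promise gap'' is precisely the step that needs a construction. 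As written, the proposal correctly identifies the strategy and the obstacle but does not prove \Cref{thm:agkr23}; the crux is missing.
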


Our approach, notably, is not just applicable to the quantum polynomial hierarchy. The notation and quantifiers defined here can be useful in proving similar results in the probabilistic world of classes or any semantic class based on a non-trivial promise.


\section{Conclusion}\label{sec: conclusion}
In this paper, we proposed a new definition of quantifiers for classes of promise problems. These help alleviate the `\emph{structural}' failure of promise problems. Under this definition, we were able to extend some useful properties of language classes to promise classes and prove the conditional collapse of the hierarchy for promise problems. We further showed some applications to the quantum polynomial hierarchy and worked on the extension of the Karp-Lipton theorem to the quantum-classical case. 

\section{Acknowledgment}
This project was originally a class project for 6.845 (Quantum Complexity Theory) in the Spring of 2022 at MIT.

\bibliographystyle{myhalpha}
\bibliography{reference}

\appendix
\section{Appendix}
In this section, we prove some auxiliary remarks for the existential and universal levels that we defined in this paper. Here, we prove that, for a hierarchy whose base class is closed under complement, if the existential class at one level is a subset of the universal class at the same level, then they are equal.
\begin{lemma}\label{lem: subset implies eq}
If the base class $\mathcal{C}$ is closed under complement then $\Sigma_i \subseteq \Pi_i \implies \Sigma_i = \Pi_i$.
\end{lemma}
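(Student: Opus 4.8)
The plan is to reduce the statement to a single structural identity: when $\mathcal{C}$ is closed under complement, complementing an existential level reproduces exactly the corresponding universal level, i.e. $\cc{co}(\Sigma_i) = \Pi_i$ for every $i$. Granting this identity, the lemma follows in one line. Complementation is monotone under inclusion — if every $A \in X$ lies in $Y$, then every $\overline{A} \in \cc{co}(X)$ lies in $\cc{co}(Y)$, so $X \subseteq Y \implies \cc{co}(X) \subseteq \cc{co}(Y)$ — and it is an involution since $\cc{co}(\cc{co}(\mathcal{C})) = \mathcal{C}$. Hence I would apply $\cc{co}$ to the hypothesis $\Sigma_i \subseteq \Pi_i$ to get $\cc{co}(\Sigma_i) \subseteq \cc{co}(\Pi_i)$, which by the identity reads $\Pi_i \subseteq \Sigma_i$; combined with the hypothesis, this gives $\Sigma_i = \Pi_i$.

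To establish $\cc{co}(\Sigma_i) = \Pi_i$ I would induct on $i$. In the base case $i = 1$, Lemma \ref{lem: complement} gives $\cc{co}(\Sigma_1) = \cc{co}(\exists \cdot \mathcal{C}) = \forall \cdot \cc{co}(\mathcal{C})$, and this is the one place where closure of $\mathcal{C}$ under complement is used: it lets me replace $\cc{co}(\mathcal{C})$ by $\mathcal{C}$, giving $\forall \cdot \mathcal{C} = \Pi_1$. For the inductive step, I first note that the hypothesis $\cc{co}(\Sigma_{i-1}) = \Pi_{i-1}$ yields $\cc{co}(\Pi_{i-1}) = \Sigma_{i-1}$ after applying $\cc{co}$ to both sides and using the involution property. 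Then
\begin{equation}
\cc{co}(\Sigma_i) = \cc{co}(\exists \cdot \Pi_{i-1}) = \forall \cdot \cc{co}(\Pi_{i-1}) = \forall \cdot \Sigma_{i-1} = \Pi_i,
\end{equation}
where the second step is Lemma \ref{lem: complement} applied with the composite base class $\Pi_{i-1}$ and the third is the rewritten inductive hypothesis.

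The proof is short, and its only content sits in the base case, which is the sole point at which the closure assumption enters; dropping it would replace $\mathcal{C}$ by $\cc{co}(\mathcal{C})$ throughout and break the clean duality. I therefore expect no genuine obstacle beyond careful bookkeeping: I must make sure that the alternation of quantifiers defining $\Sigma_i$ and $\Pi_i$ is tracked correctly through the induction, and that Lemma \ref{lem: complement}, though stated for an arbitrary promise class, is legitimately being applied to the composite classes $\Pi_{i-1}$ rather than only to the base class $\mathcal{C}$.
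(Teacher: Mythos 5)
Your proof is correct and follows essentially the same route as the paper: apply $\cc{co}$ to the inclusion $\Sigma_i \subseteq \Pi_i$, push the complement through the quantifier prefix via Lemma \ref{lem: complement}, and use closure of $\mathcal{C}$ under complement to recover $\Pi_i \subseteq \Sigma_i$. The only difference is presentational — you package the repeated application of Lemma \ref{lem: complement} as an explicit induction establishing $\cc{co}(\Sigma_i) = \Pi_i$, whereas the paper carries it out in one chain of displayed inclusions.
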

\begin{proof}
Assume $\Sigma_i \subseteq \Pi_i$. Here, we can take advantage of the fact that  
\begin{equation}
\mathcal{C}_1 \subseteq \mathcal{C}_2 \iff \cc{co}(\mathcal{C}_1) \subseteq \cc{co}(\mathcal{C}_2)
\end{equation}
We also know that $\mathcal{C}$ is closed under complement. This means $\mathcal{C} = \cc{co}(\mathcal{C})$.
Then, the following equations hold
\begin{eqnarray*}
\Sigma_i &\subseteq& \Pi_i\\ 
\underbrace{\exists\cdot\forall\cdots}_{i\text{ times}} \mathcal{C} &\subseteq& \underbrace{\forall\cdot\exists\cdots}_{i\text{ times}} \mathcal{C}\\
\cc{co}(\underbrace{\exists\cdot\forall\cdots}_{i\text{ times}} \mathcal{C}) &\subseteq& \cc{co}(\underbrace{\forall\cdot\exists\cdots}_{i\text{ times}} \mathcal{C})\\
\underbrace{\forall\cdot\exists\cdots}_{i\text{ times}}\cc{co}(\mathcal{C}) &\subseteq& \underbrace{\exists\cdot\forall\cdots}_{i\text{ times}}\cc{co}( \mathcal{C})\qquad \qquad (\text{From Lemma \ref{lem: complement}})\\
\underbrace{\forall\cdot\exists\cdots}_{i\text{ times}}\mathcal{C} &\subseteq& \underbrace{\exists\cdot\forall\cdots}_{i\text{ times}} \mathcal{C}\\
\Pi_i &\subseteq& \Sigma_i
\end{eqnarray*}
Now, as $\Sigma_i \subseteq \Pi_i\implies \Pi_i \subseteq \Sigma_i$, if $\Sigma_i \subseteq \Pi_i$, 
\begin{equation}
    \Sigma_i = \Pi_i.
\end{equation}
\end{proof}
\end{document}